\documentclass[11pt]{article}

\usepackage[preprint]{acl}

\usepackage{times}
\usepackage{latexsym}

\usepackage[T1]{fontenc}

\usepackage[utf8]{inputenc}

\usepackage{microtype}

\usepackage{inconsolata}

\usepackage{graphicx}

\usepackage{amsmath,amssymb,amsthm}
\usepackage[ruled,vlined,linesnumbered]{algorithm2e}
\usepackage{booktabs,tabularx}
\usepackage{multirow}
\usepackage{makecell}
\usepackage{siunitx}
\usepackage{adjustbox}
\usepackage{enumitem}
\usepackage{threeparttable}

\newtheorem{lemma}{Lemma}

\usepackage[most]{tcolorbox}
\usepackage{xcolor}

\title{RMCW: A Deletion-Robust Watermark Based on Reed--Muller Codes\\for Language Models}

\author{
Yi Wang$^{1,5}$\thanks{Equal contribution; order decided by a coin flip.},
Baicheng Chen$^{2,4}$\footnotemark[1],
Yu Wang$^{6}$, Jian Zhao$^{1,3}$,
\textbf{Yilei Chen}$^{1,2}$\thanks{Corresponding authors.}, \textbf{Tianxing He}$^{1,3,2}$\footnotemark[2] \\
$^1$Institute for Interdisciplinary Information Sciences, Tsinghua University,\\
$^2$Shanghai Qi Zhi Institute,
$^3$Xiongan AI Institute,\\
$^4$The Chinese University of Hong Kong, Shenzhen,
$^5$University of Wisconsin--Madison,\\
$^6$Institute of Information Engineering, Chinese Academy of Sciences\\
\texttt{wang3883@wisc.edu} \quad \texttt{baichengchen@link.cuhk.edu.cn}\\
\texttt{\textrm{\{}chenyilei, hetianxing\textrm{\}}@mail.tsinghua.edu.cn} \\
}

\newcommand{\method}{RMCW\xspace}
\newcommand{\Sref}[1]{\S\ref{#1}}

\begin{document}
\maketitle
\begin{abstract}
Large Language Model (LLM) watermarking provides a lightweight mechanism for identifying text generated by a specific model, but its robustness remains fragile under post-processing attacks. Deletion attacks are particularly challenging because they shift token positions and break the alignment between observed tokens and their original watermark positions. We propose \textbf{Reed--Muller Code Watermarking (\method)}, an LLM watermarking method based on Reed--Muller codes. In contrast to global codeword recovery, \method searches for surviving local algebraic structure, leveraging the Reed--Solomon consistency induced by affine-line restrictions of Reed--Muller codewords. During generation, \method injects a Reed--Muller structure into the sequence via a secret-keyed vocabulary partition. During detection, it maps the given text to keyed vocabulary bins and tests local subsequences for low-degree Reed--Solomon consistency using Berlekamp--Welch tests. Experiments on C4 and ELI5 datasets with OPT-1.3B and Llama-3.1-8B-Instruct show that \method preserves strong clean-text detectability and outperforms or matches the baseline methods under several deletion and rewriting attacks. Our code is available at \url{https://github.com/BaichengDanny/RMCW}.
\end{abstract}

\section{Introduction}

Large language models (LLMs) support a wide range of tasks, including instruction following~\cite{ouyang2022training}, knowledge-intensive question answering~\citep{lewis2020retrieval,wang2025aicrypto,chen2026crebench}, and code generation~\citep{yang2024swe,chen2026crebench}.
While these systems improve productivity, they also make it difficult to determine whether a passage is written by a human or produced by a model~\citep{chakraborty2023possibilities}. 
LLM watermarking addresses this problem by injecting a hidden signal during generation, allowing a detector to later identify whether a suspect passage is produced by a watermarked model~\citep{kirchenbauer2023watermark,dathathri2024scalable,liu2024survey}.
A practical watermark should remain detectable from a limited amount of text, introduce little degradation in generation quality, and tolerate common post-processing operations~\citep{kirchenbauer2023watermark,kuditipudi2023robust,lalai2025intentions,kirchenbauer2024reliability}.

Error-correcting codes (ECCs) provide a theoretical way for robust watermarking. By introducing structured redundancy, an ECC can preserve a detectable signal even when some embedded symbols are corrupted ~\citep{christ2024pseudorandom,qu2025provably}.
However, when the encoded signal is tied to generation positions, deletions introduce an additional challenge for watermark detection in practice. A deletion removes a token and shifts all subsequent token-derived symbols relative to the original watermark positions~\citep{kirchenbauer2023watermark}. We refer to this deletion-induced loss of position alignment as \emph{synchronization loss}~\citep{lcvenshtcin1966binary}, which is common when generated text is shortened, cropped, or only partially reproduced before
redistribution~\citep{pan2024markllm,christ2024pseudorandom,kuditipudi2023robust}.

\begin{figure*}[t]
    \centering
    \includegraphics[width=\textwidth]{figures/RMCW_overview.pdf}
    \caption{
    Overview of Reed--Muller Code Watermarking (\method).
    \textbf{Generation (top):} First, a secret-keyed vocabulary partition assigns each vocabulary token to one of $q$ bins, while a bivariate Reed--Muller codeword is sampled as a $q\times q$ grid (explained in \Sref{sec:generation}). Then, the sampled grid is serialized into a structured sequence of target bin indices. At each decoding step, the language model increases the logits of tokens assigned to the target bin.
    \textbf{Detection (bottom):} Given candidate text that may have undergone deletion or rewriting, the detector maps each token to its secret-keyed bin label and samples local windows at candidate strides (explained in \Sref{sec:detection}). It tests each window for low-degree Reed--Solomon consistency, aggregates evidence across the tested windows, and outputs the watermark detection decision.
    }
    \label{fig:framework}
    \vspace{-0.5cm}
\end{figure*}

To address the synchronization loss challenge, we propose \textbf{Reed--Muller Code Watermarking (\method)}, an LLM watermarking method based on the Reed--Muller code. Our key observation is that watermark detection is a weaker task than full codeword recovery: the detector only needs to find a structure in the observed sequence that is unlikely to occur in unwatermarked text. Reed--Muller codes are well suited to this goal because their restrictions on affine lines form Reed--Solomon codewords. 
During generation, a secret-keyed vocabulary partition assigns each token a $q$-ary bin label, and the model is biased toward a structured sequence derived from a Reed--Muller codeword.
During detection, \method maps candidate text back to bin labels and searches local subsequences for low-degree Reed--Solomon consistency. 
This local detection design enables the detector to identify surviving algebraic evidence even when deletions break the original position alignment.

We evaluate \method on C4~\citep{raffel2020exploring} and ELI5~\citep{fan2019eli5} using OPT-1.3B~\citep{zhang2022opt} and Llama-3.1-8B-Instruct~\citep{grattafiori2024Llama}. We compare against KGW~\citep{kirchenbauer2023watermark}, EXP~\citep{aronsonpowerpoint}, and PRC~\citep{christ2024pseudorandom} under clean generation and multiple post-processing attacks. 
Our results show that \method preserves strong clean-text detectability, achieving 99.8\% TPR@1\%FPR on unmodified text, while retaining 98.1\% under burst deletion and 86.6\% under synonym substitution. Overall, \method substantially improves upon or remains competitive with the baseline methods across several post-processing attacks.

Our contributions are as follows:
\begin{itemize}[nosep]
    \item We identify deletion-induced synchronization loss as a key challenge for position-aligned coding-based watermarks and formulate detection as local structure testing rather than global codeword recovery.
    \item We propose \method, a new watermarking framework for LLMs that combines Reed--Muller-coded watermark generation with local Reed--Solomon consistency testing.
    \item We theoretically analyze the local consistency test and empirically evaluate RMCW across multiple models, datasets, and post-processing attacks.
\end{itemize}

\section{Related Work}

\paragraph{LLM watermark.} LLM watermarking aims to identify machine-generated text by embedding a hidden signal during generation while preserving text quality~\citep{liu2024survey,liang2026watermarking}. A widely used paradigm is token-level statistical watermarking. KGW~\citep{kirchenbauer2023watermark} partitions the vocabulary into green and red token sets at each generation step, biases decoding toward green tokens, and detects the watermark by testing the fraction of green tokens. 

Follow-up methods improve this paradigm from different perspectives, such as using fixed vocabulary partitions~\citep{zhao2024unigram}, improving detection in low-entropy regions~\citep{lu2024entropy,lee2024wrote}, or making the signal more stable under semantic-preserving edits~\citep{liu2024semantic}. Another line studies distribution-preserving or distortion-free watermarking, where the watermarked sampler is designed to better preserve the original model distribution while still enabling keyed detection~\citep{christ2024undetectable,kuditipudi2023robust}.

Another sequence of work connects watermarking with coding theory~\citep{christ2024pseudorandom,qu2025provably}. PRC watermarking uses pseudorandom error-correcting codes so that watermarked text is difficult to distinguish without the secret key, while a keyed detector can still identify the watermark after text edits~\citep{christ2024pseudorandom}. 
Our work follows this coding-based direction, but focuses on robustness under deletion attacks.

\paragraph{Error-correcting codes.} Error-correcting codes (ECCs) provide structured redundancy for reliable communication under noise. Among classical algebraic codes, Reed--Solomon (RS) codes evaluate low-degree univariate polynomials over finite fields~\citep{ReedSolomon}. Their algebraic structure supports efficient consistency checking and decoding, including the Berlekamp--Welch algorithm for bounded error correction~\citep{welch1986error} and Guruswami--Sudan list decoding~\citep{GuruswamiSudan}. 

Reed--Muller (RM) codes generalize this idea to multivariate low-degree polynomials~\citep{RM-Reed}.  A key property of RM codes is that restricting an RM codeword to any affine line yields an RS codeword. This affine line structure enables local testing of low-degree consistency, which is central to our watermark detector.

\section{Method}

We give an overview of \method in Figure~\ref{fig:framework} and begin with the preliminaries.

\subsection{Preliminaries}
\label{sec:preliminaries}

\noindent We briefly introduce the algebraic properties used in our watermark construction.
Let $\mathbb{F}_q$ be a finite field of prime order $q$.\footnote{In this case, field operations can be understood as arithmetic modulo $q$.} \method uses a polynomial degree bound $d$ and codeword length $m$, which satisfy $0 \leq d < m \leq q$. We also assume the LLM is autoregressive.

\vspace{-2mm}
\paragraph{Reed--Solomon codes.} 
Let $g(z)\in\mathbb F_q[z]$ be a univariate polynomial. Evaluating $g$ on $m$ distinct field elements $0,\ldots,m-1$ gives a length-$m$ codeword vector $\mathbf c(g)$. The degree-$d$ Reed--Solomon (RS) code collects these codewords:
\begin{equation}
\begin{aligned}
    \mathbf c(g)
    &:=
    \bigl(g(0),\ldots,g(m-1)\bigr),\\
    \mathrm{RS}_{q,d,m}
    &:=
    \left\{
        \mathbf c(g):
        g\in\mathbb F_q[z],\ \deg(g)\le d
    \right\}.
\end{aligned}
\label{eq:rs-code}
\end{equation}
Here $m\le q$ makes the inputs distinct, while $d<m$ gives the code redundancy. Therefore, a vector drawn from this code family must agree with a single low-degree polynomial across all coordinates.

\paragraph{Reed--Muller codes.}
\method uses a bivariate polynomial $f(u,v)\in\mathbb F_q[u,v]$ of total degree at most $d$: $f(u,v)=\sum_{i+j\leq d}\beta_{i,j}u^iv^j$, where $\beta_{i,j}\in\mathbb F_q$ are its coefficients. Its evaluations over $\mathbb F_q^2$ form an RM codeword, represented as a $q\times q$ grid $C_f$ with $C_f[u,v]=f(u,v)$ (illustrated in Figure~\ref{fig:framework}(b)).

\paragraph{Affine restrictions.}
RM codes exhibit a key local property on affine lines. Choose a starting point $\mathbf a=(a_1,a_2)\in\mathbb F_q^2$ and a nonzero direction $\mathbf b=(b_1,b_2)\in\mathbb F_q^2$. As $t\in\mathbb F_q$ varies, the points $\mathbf a+t\mathbf b$ form an affine line. Restricting $f$ to this line gives
\begin{equation}
    g(t)
    :=
    f(\mathbf a+t\mathbf b)
    =
    f(a_1+tb_1,a_2+tb_2).
\label{eq:affine-restriction}
\end{equation}
Since both arguments of $f$ are linear in $t$, $g$ is univariate with degree at most $d$. Therefore, if we move along an affine line and record the RM symbol at each visited grid point, the resulting 1D symbol sequence is an RS codeword (illustrated in Figure~\ref{fig:framework}(b)). This property supplies many locally testable low-degree structures inside one RM grid.

\paragraph{Threat model.}
In our setting, an adversary can perform text editing on the output of a watermarked LLM before it reaches the detector. Given the edited text, the detector (\textit{e.g.}, model provider) aims to decide watermark presence without access to the original text and positions, or the correspondence between surviving tokens and RM coordinates.

\begin{table}[t]
\centering
\small
\setlength{\tabcolsep}{4pt}
\renewcommand{\arraystretch}{1.04}
\begin{tabularx}{\columnwidth}{@{}lX@{}}
\toprule
\textbf{Notation} & \textbf{Meaning} \\
\midrule
$q,d,m,e$ & Field size, degree bound, symbols per window, and tolerated mismatches. \\
$K_{\mathrm{bin}},b_{K_{\mathrm{bin}}}$ & Secret binning key and token-to-symbol map. \\
$s_{\mathrm{payload}},f_{\mathrm{pay}},A_{\mathrm{pay}}$ & Payload seed, RM polynomial, and flattened target sequence. \\
$\widehat A,W(s,\tau)$ & Observed symbols and a window with start $s$ and stride $\tau$. \\
$X_\tau,\pi_{0,\tau},p_\tau,p_{\mathrm{det}}$ &
Success count, null rate, and approximate detection scores. \\
\bottomrule
\end{tabularx}
\vspace{-3mm}
\caption{Core notations.}
\label{tab:notation}
\vspace{-5mm}
\end{table}

\subsection{From Global Decoding to Local Detection}
Traditional coding-based watermarking aims to recover an embedded message after adversarial modifications~\citep{christ2024pseudorandom}. This is natural when the received symbols remain aligned with their original positions. However, deletions break this alignment. After deletion, the observed symbol sequence becomes an incomplete and position-shifted realization of the intended watermark pattern.
Thus, the main difficulty is not only that some symbols are missing, but also that the remaining symbols are no longer synchronized with the watermark positions.

For zero-bit watermarking, recovering the original codeword is stronger than necessary. The detector only needs to decide whether a suspect text is drawn from the watermarked distribution. 
Therefore, we shift from global codeword recovery to local structure testing. 
This detection-oriented view is the basis of our deletion-robust design.

\setlength{\textfloatsep}{6pt plus 2pt minus 1pt}

\begin{algorithm}[t]
\caption{Watermark Generation}
\label{alg:generation}
\small
\DontPrintSemicolon
\KwIn{Prompt $x$, language model $M$, parameters $(q,d,\delta,T)$,
binning key $K_{\mathrm{bin}}$, payload seed $s_{\mathrm{payload}}$}
\KwOut{Watermarked text $y$}

Construct $b_{K_\mathrm{bin}}:\mathcal V\rightarrow\mathbb F_q$\;

Use $s_{\mathrm{payload}}$ to construct the degree-$d$ RM polynomial
$f_\mathrm{pay}$ and serialize the resulting codeword as
$A_\mathrm{pay}=(a_0,\ldots,a_{q^2-1})$\;

Initialize the generation context with $x$\;

\For{$t\leftarrow0$ \KwTo $T-1$}{
    Obtain next-token logits $\ell_t(v)$ from $M$\;

    Set $a_t\leftarrow A_\mathrm{pay}[t\bmod q^2]$ and
    $G_t\leftarrow\{v\in\mathcal{V}:b_{K_\mathrm{bin}}(v)=a_t\}$\;

    Bias the logits by
    $\widetilde{\ell}_t(v)\leftarrow
    \ell_t(v)+\delta\cdot\mathbf 1\{v\in G_t\}$\;

    Sample the next token from
    $\mathrm{softmax}(\widetilde{\ell}_t)$ and append it to the context\;
}

\Return generated text $y$\;
\end{algorithm}
\begin{algorithm}[t]
\caption{Watermark Detection}
\label{alg:detection}
\small
\DontPrintSemicolon

\KwIn{Suspect text $y'$, parameters $(q,d,m,e,N_{\mathrm{pilot}},N_{\mathrm{trials}},k,\alpha)$,
binning key $K_{\mathrm{bin}}$, null rates
$\{\pi_{0,\tau}\}_{\tau\in\mathcal T}$}
\KwOut{Detection decision and scores}

Tokenize $y'$ as $(w'_1,\ldots,w'_n)$ and set
$\widehat a_i\leftarrow b_{K_{\mathrm{bin}}}(w'_i)$\;

\If{$n<m$}{
    \Return \textsc{Not Detected} with
    $p_{\mathrm{det}}=1$\;
}

Construct candidate strides $\mathcal{T}=\left\{1,\ldots,\left\lfloor\frac{n-1}{m-1}\right\rfloor\right\}$

Use $N_{\mathrm{pilot}}$ pilot trials to select up to $k$
highest-scoring valid strides
$\mathcal R\subseteq\mathcal T$\;

\ForEach{$\tau\in\mathcal R$}{
    $X_\tau\leftarrow0$\;
    \For{$r\leftarrow1$ \KwTo $N_{\mathrm{trials}}$}{
        Sample a valid start index $s_r^{(\tau)}$ uniformly from $\{1,\ldots,n-(m-1)\tau\}$\;

        Set window $W(s_r^{(\tau)},\tau)$ starting with $s_r^{(\tau)}$\;

        $X_\tau\leftarrow X_\tau+
        \mathrm{Cons}_e(W(s_r^{(\tau)},\tau))$ (using Eq.~\eqref{eq:cons})\;
    }

    Compute $p_\tau$ using Eq.~\eqref{eq:stride-score}\;}

$p_{\mathrm{det}}\leftarrow
\min\{1,|\mathcal R|
\min_{\tau\in\mathcal R}p_\tau\}$\;

\If{$p_{\mathrm{det}}\le\alpha$}{
    \Return \textsc{Detected} with
    $(\{X_\tau,p_\tau\}_{\tau\in\mathcal R},
    p_{\mathrm{det}})$\;
}
\Else{
    \Return \textsc{Not Detected} with
    $(\{X_\tau,p_\tau\}_{\tau\in\mathcal R},
    p_{\mathrm{det}})$\;
}
\end{algorithm}

\subsection{Watermark Generation}
\label{sec:generation}
As illustrated in Algorithm~\ref{alg:generation}, our method follows the standard logits-based watermarking paradigm: the base language model remains fixed, and the watermark is injected by slightly modifying the next token distribution during decoding~\cite{kirchenbauer2023watermark}. The generation procedure has three steps: keyed vocabulary partition, Reed--Muller symbol construction, and logit bias injection. 

\paragraph{Keyed vocabulary partition.} First, let $\mathcal{V}$ be the language model vocabulary and let $v\in\mathcal{V}$ denote a candidate next token. A secret key $K_{\mathrm{bin}}$ defines a keyed partition of $\mathcal{V}$ into $q$ bins:
\begin{equation}
\label{eq:binning}
    b_{K_{\mathrm{bin}}}(v) = \mathrm{HMAC}(K_{\mathrm{bin}}, v) \bmod q \in \mathbb{F}_q.
\end{equation}
Here, $\mathrm{HMAC}$ is used only as a practical keyed pseudorandom function~\citep{krawczyk1997hmac} and $b_{K_{\mathrm{bin}}}(v)\in\mathbb{F}_q$ is the watermark symbol assigned to token $v$. This partition associates each symbol with a keyed subset of the vocabulary, allowing the decoder to preserve lexical flexibility while biasing generation toward a target symbol (illustrated in Figure~\ref{fig:framework}(a)).

\paragraph{Reed--Muller symbol construction.} Next, a payload seed $s_{\mathrm{payload}}$ is expanded pseudorandomly into coefficients $\beta_{i,j}\in\mathbb{F}_q$. These coefficients define a degree-$d$ bivariate polynomial and its $q\times q$ grid:
\begin{equation}
\begin{split}
f_{\mathrm{pay}}(u_1,u_2) &= \sum_{i+j\le d} \beta_{i,j} u_1^i u_2^j, \\
C_{\mathrm{pay}}[u_1,u_2] &= f_{\mathrm{pay}}(u_1,u_2), \quad (u_1,u_2)\in\mathbb{F}_q^2.
\end{split}
\end{equation}
The label $\mathrm{pay}$ indicates dependence on $s_{\mathrm{payload}}$. The grid $C_{\mathrm{pay}}$ represents an RM codeword. Row-major serialization gives $A_{\mathrm{pay}}=(a_0,\cdots,a_{q^2-1})$. At generation step $t$, the target symbol is $a_t=A_{\mathrm{pay}}[t \mod q^2]$, which repeats or truncates the base sequence according to the requested generation length $T$ (illustrated in Figure~\ref{fig:framework}(b)).

\paragraph{Logit bias injection.} Finally, let $\ell_t(v)$ be the original logit for candidate token $v$ at step $t$. We increase the logits of tokens assigned to the target symbol:
\begin{equation}
    \widetilde{\ell}_t(v)=\ell_t(v)+\delta\cdot\mathbf{1}\{b_{K_{\mathrm{bin}}}(v)=a_t\}, \delta\geq0.
\end{equation}
The next token $y_{t+1}$ is sampled from the distribution induced by $\widetilde{\ell}_t$ (illustrated in Figure~\ref{fig:framework}(c)).

\subsection{Watermark Detection}
\label{sec:detection}
As shown in Algorithm~\ref{alg:detection}, given a suspect text, the detector tests whether its induced symbol sequence contains local algebraic structure generated by the secret RM codeword and aggregates the evidence through a calibrated hypothesis test.

\paragraph{Token-to-symbol conversion.} First, the detector tokenizes the suspect text $y'$ as $(w'_1,\ldots,w'_n)$, where $w_i'\in\mathcal{V}$. Applying the same keyed partition as in Eq.~\eqref{eq:binning} gives:
\begin{equation}
    \hat{A}=(\hat{a}_1,\cdots,\hat{a}_n),\quad \hat{a}_i=b_{K_{\mathrm{bin}}}(w'_i)\in \mathbb{F}_q.
\end{equation}
The remainder of the detector operates only on $\hat{A}$.\footnote{Detection requires access to the secret binning key $K_{\mathrm{bin}}$ to reconstruct the same keyed vocabulary partition.}

\paragraph{Local consistency test.} Next, for a start index $s$, a window length $m$, and a stride $\tau$, we first define the window to be
\begin{equation}
\begin{split}
    W(s,\tau) = \bigl( & \hat{a}_{s}, \hat{a}_{s+\tau}, \ldots, \hat{a}_{s+(m-1)\tau} \bigr), \\
    & s \in \bigl\{ 1, \ldots, n-(m-1)\tau \bigr\},
\end{split}
\end{equation}
which guarantees that all indices are valid.

For each window, the detector checks whether it is consistent with a degree $d$ univariate polynomial over $\mathbb{F}_q$ (refer to \Sref{sec:preliminaries}).
We use the following error-tolerant consistency criterion:
\begin{equation}
\label{eq:cons}
\begin{gathered}
\mathrm{Cons}_e(W(s,\tau))=1 \Longleftrightarrow \exists g\in\mathbb{F}_q[z], \\
\deg(g)\le d,\ 
\left|\{j: W_j(s,\tau)\neq g(j)\}\right|\le e.
\end{gathered}
\end{equation}
Here, $e$ is the tolerated mismatch budget. We evaluate the predicate using Berlekamp--Welch decoding followed by explicit distance verification~\cite{welch1986error}. Setting $e=0$ gives exact RS consistency, while $e>0$ tolerates a limited number of substitutions or bin mismatches.

\paragraph{Adaptive stride selection.} The detector first constructs a candidate set of all valid strides $\mathcal{T}=\left\{1,\ldots,\left\lfloor\frac{n-1}{m-1}\right\rfloor\right\}$.
A candidate set $\mathcal{T}$ contains the strides searched by the detector. For each valid $\tau\in\mathcal{T}$, a pilot stage samples $N_{\mathrm{pilot}}$ windows and counts their consistency successes. The detector retains up to $k$ highest-scoring valid strides in $\mathcal{R}\subseteq\mathcal{T}$ for the full test. This step is a heuristic search for promising windows, and lets the detector adapt to different deletion patterns without assuming a fixed edit structure in advance.

\paragraph{Statistical decision score.} After pilot selection, the detector evaluates each shortlisted stride $\tau\in\mathcal{R}$ using $N_{\mathrm{trials}}$ sampled windows.
Let $s_r^{(\tau)}$ be the start index sampled in the $r$-th trial and $X_\tau$ be the number of sampled windows that pass the local consistency test for stride $\tau$.
Define
\begin{equation}
\label{eq:stride-score}
\begin{split}
    X_\tau &= \sum_{r=1}^{N_\mathrm{trials}}\mathrm{Cons}_e(W(s_r^{(\tau)},\tau)), \\
    p_\tau &=\Pr_{Y\sim\mathrm{Binom}(N_\mathrm{trials},\pi_{0,\tau})} \left[Y\ge X_\tau\right].
\end{split}
\end{equation}
Here, $\pi_{0,\tau}$ is the reference null probability that one sampled window passes the consistency test under stride $\tau$. We aggregate the strongest shortlisted result using Bonferroni correction:
\begin{equation}
p_{\mathrm{det}}=\min\left\{1,\ |\mathcal{R}|\cdot \min_{\tau\in\mathcal{R}}p_\tau\right\}.
\end{equation}
The detector declares the text as watermarked if $p_{\mathrm{det}}\le\alpha$, where $\alpha$ is the target false positive level.

\begin{table*}[t]
\centering
\small
\setlength{\tabcolsep}{4.5pt}
\begin{tabular}{llcccccccc}
\toprule
\multirow{2}{*}{Method} & \multirow{2}{*}{Metric}
& \multicolumn{8}{c}{Attack Scenarios} \\
\cmidrule(lr){3-10}
& & Clean & Rand. Del. & Burst Del. & Trunc. & Syn. Sub. & Para. & Emoji & Avg. \\
\midrule

\multirow{3}{*}{KGW}
& TPR@1\%FPR & 99.7 & \textbf{98.2} & 96.4 & 97.2 & 69.5 & \textbf{12.2} & 0.0 & 67.6 \\
& TPR@5\%FPR & \textbf{99.8} & \textbf{99.3} & \textbf{98.9} & \textbf{99.1} & 85.3 & 15.3 & 0.0 & 71.1 \\
& AUROC       & \textbf{99.9} & \textbf{99.8} & \textbf{99.8} & \textbf{99.8} & 97.3 & 41.2 & 28.6 & 80.9 \\
\midrule

\multirow{3}{*}{EXP}
& TPR@1\%FPR & 98.9 & 79.1 & 93.8 & 94.6 & 21.1 & 10.1 & 1.9 & 57.1 \\
& TPR@5\%FPR & 99.5 & 87.3 & 97.3 & 96.8 & 33.3 & \textbf{15.6} & 5.6 & 62.2 \\
& AUROC       & 99.6 & 96.9 & 99.2 & 99.0 & 75.9 & 47.5 & 48.4 & 80.9 \\
\midrule

\multirow{3}{*}{PRC}
& TPR@1\%FPR & 95.0 & 55.2 & 45.9 & 39.7 & 29.9 & 8.4 & 0.0 & 39.2 \\
& TPR@5\%FPR & 97.6 & 55.9 & 47.1 & 40.4 & 36.2 & 9.6 & 0.2 & 41.0 \\
& AUROC       & 98.9 & 56.2 & 47.5 & 40.7 & 42.4 & 32.7 & 33.1 & 50.2 \\
\midrule

\multirow{3}{*}{\method (Ours)}
& TPR@1\%FPR & \textbf{99.8} & 90.0 & \textbf{98.1} & \textbf{98.9} & \textbf{86.6} & 8.3 & \textbf{95.5} & \textbf{82.5} \\
& TPR@5\%FPR & \textbf{99.8} & 90.4 & 98.8 & 99.0 & \textbf{94.2} & 12.1 & \textbf{95.7} & \textbf{84.3} \\
& AUROC       & \textbf{99.9} & 94.9 & 99.0 & 99.4 & \textbf{98.5} & \textbf{53.7} & \textbf{97.7} & \textbf{91.9} \\
\bottomrule
\end{tabular}
\vspace{-2mm}
\caption{
Main results on C4 with Llama-3.1-8B-Instruct for 500-token generations. We report TPR at \(1\%\) and \(5\%\) false-positive rates (FPRs) and AUROC, all expressed as percentages, on unmodified text (\textsc{Clean}) and under six post-processing attacks. \textsc{Rand. Del.}, \textsc{Burst Del.}, \textsc{Trunc.}, \textsc{Syn. Sub.}, \textsc{Para.}, and \textsc{Emoji} denote random deletion, burst deletion, truncation, synonym substitution, paraphrasing, and emoji attack, respectively. \textsc{Avg.} is the arithmetic mean across the clean setting and all six attack scenarios. Higher values indicate stronger watermark detectability. The best result for each metric under each scenario, including the average, is shown in bold.
}
\label{tab:main_results}
\vspace{-0.5cm}
\end{table*}

\subsection{Theoretical Analysis}
The theoretical foundation of our construction relies on the local RS structure induced by affine restrictions of RM codes. In particular, restricting an RM codeword to any affine line yields a low-degree univariate polynomial, allowing the use of classical algebraic decoding techniques.

\paragraph{Deletion robustness.}
Unlike classical codeword-recovery settings, \method performs detection rather than reconstruction. The detector only needs to find a surviving local subsequence that is consistent, up to a bounded number of errors, with a low-degree RS codeword. Since every affine-line restriction of an RM codeword yields such an RS codeword, deleting part of the generated text may still leave locally detectable algebraic evidence.

RS codes naturally tolerate partial observations, and the remaining symbols of a line can continue to exhibit low-degree consistency. Therefore, for \method, deletions mainly reduce the number of observable coordinates rather than eliminating the watermark structure itself. Even when burst deletions remove an entire region, other windows may retain detectable RS structure.

\paragraph{Berlekamp--Welch decoding.}
To formalize this test, we use the Berlekamp--Welch algorithm, which reconstructs a low-degree polynomial from corrupted outputs.

\noindent\textbf{Proposition 1 (Berlekamp--Welch Algorithm).}
\emph{Let $S=\{\alpha_1,\ldots,\alpha_m\}\subseteq\mathbb F_q$ contain distinct field elements, and let $g\in\mathbb F_q[z]$ satisfy $\deg(g)\le d$. Suppose the received values
$y_1,\ldots,y_m$ differ from $(g(\alpha_1),\ldots,g(\alpha_m))$ in at most $e$ coordinates. If $2e<m-d$, then Berlekamp--Welch uniquely reconstructs $g$ in polynomial time.}

This guarantee provides the algebraic consistency test used by \method to identify local RS structure in the presence of substitution noise.

\paragraph{Substitution robustness.}
Let
\begin{equation}
    \begin{split}
    \mathrm{RS}_{q,d}(S)
    = \bigl\{ &(g(\alpha_1),\ldots,g(\alpha_m)) : \\
    & g\in\mathbb F_q[z],\ \deg(g)\le d \bigr\}
    \end{split}
    \label{eq:rs-evaluation-set}
\end{equation}
be the degree-$d$ RS code over the set $S$. Suppose a codeword is transmitted through a substitution channel in which each coordinate is independently corrupted with probability $\rho$. Let $\mathcal T_{\mathrm{wm}}$ denote the resulting noisy distribution and let $\mathcal T_{\mathrm{null}}$ be the uniform distribution over $\mathbb F_q^m$. The detector accepts when Berlekamp--Welch succeeds with decoding radius $e$.

\noindent\textbf{Theorem 1 (Substitution Bound).}
\emph{Assume $2e<m-d$. Define}
\begin{equation}
    P_{\mathrm{acc}}^{\mathrm{wm}}
    =
    \sum_{i=0}^{e}
    \binom{m}{i}\rho^i(1-\rho)^{m-i}.
\label{eq:wm-acceptance}
\end{equation}
\emph{This is the probability that a noisy RS codeword contains at most $e$ substitutions. Define}
\begin{equation}
    P_{\mathrm{acc}}^{\mathrm{null}}
    =
    \frac{
        \sum_{i=0}^{e}\binom{m}{i}(q-1)^i
    }{q^{m-d-1}}.
\label{eq:null-acceptance}
\end{equation}
\emph{This is the probability that a uniformly random vector lies within Hamming distance at most $e$ of an RS codeword. If}
\begin{equation}
    0.99P_{\mathrm{acc}}^{\mathrm{wm}}
    >
    P_{\mathrm{acc}}^{\mathrm{null}},
\label{eq:substitution-separation}
\end{equation}
\emph{then $\mathcal T_{\mathrm{wm}}$ and
$\mathcal T_{\mathrm{null}}$ are distinguishable with constant advantage.}

The theorem gives a direct parameter condition for substitution-robust detection. Its proof is provided in Appendix~\ref{app:theorem1}.

\section{Experiments}
\label{sec:experiment}

\subsection{Experimental Setup} 

\paragraph{Models.} We conduct experiments on two open-weight language models: OPT-1.3B~\citep{zhang2022opt} and Llama-3.1-8B-Instruct~\citep{grattafiori2024Llama}. These two models differ in model scale and tokenizer design, which allows us to evaluate whether the watermarking behavior remains consistent across model families.

\paragraph{Baseline methods.} We compare \method against KGW~\citep{kirchenbauer2023watermark}, EXP~\citep{aronsonpowerpoint}, and PRC~\citep{christ2024pseudorandom}. KGW is the standard token-level green-list watermark, EXP is a sampling-based watermark, and PRC is the closest ECC-based watermark baseline. Implementation details are provided in Appendix~\ref{app:imp-baseline}.

\paragraph{Datasets.} We evaluate all watermarking methods on two generation settings with a maximum generation length of 500: open-ended text generation and long-form question answering. For open-ended generation, we use the RealNews subset of C4~\citep{raffel2020exploring}, which is widely used in LLM watermarking studies. For long-form question answering, we use ELI5~\citep{fan2019eli5}.

\paragraph{Attacks.} We evaluate watermark robustness under both deletion-based and rewriting-based attacks~\citep{pan2024markllm}. For deletion-based attacks, we consider random deletion, burst deletion, partial truncation, and emoji attack. For rewriting-based attacks, we consider synonym substitution and paraphrasing (with DIPPER paraphraser~\citep{krishna2023paraphrasing}). Details are provided in Appendix~\ref{app:attacks}.

\paragraph{Evaluation metrics.} We report TPR@1\%FPR, TPR@5\%FPR, and AUROC~\citep{gu2026ssg}. Watermarked generations are treated as positive samples, and human-written continuations provided in the datasets are treated as negative samples. TPR@1\%FPR and TPR@5\%FPR measure detection sensitivity under strict false-positive constraints, while AUROC measures the overall separation between watermarked and unwatermarked texts under detection thresholds.

For more detailed experimental settings, please refer to Appendix~\ref{app:setup}.

\subsection{Main Results}
\label{sec:main_results}

Table~\ref{tab:main_results} reports the main robustness results on C4 with Llama-3.1-8B-Instruct (see \Sref{sec:cross} for results on the other model and dataset).
\textbf{Overall, \method maintains strong detectability on unmodified text while providing substantial robustness gains under post-processing attacks.}
On clean generations, \method obtains 99.8\% TPR@1\%FPR and 99.9\% AUROC, comparable to KGW and outperforming EXP and PRC.
When averaged across the clean setting and all six attack scenarios, \method achieves the best performance under all three evaluation metrics, with 82.5\% TPR@1\%FPR, 84.3\% TPR@5\%FPR, and 91.9\% AUROC.
These results exceed the strongest baseline averages by 14.9, 13.2, and 11.0 percentage points, respectively, demonstrating \method leads to robustness gains under post-processing attacks.

\paragraph{Deletion-based attacks.} For random deletion, \method achieves the second-highest TPR@1\%FPR of 90.0\%, outperforming EXP and PRC, although it remains below KGW. It also obtains an AUROC of 94.9\%. The advantage of \method is more pronounced under structured deletions: it achieves 98.1\% TPR@1\%FPR under burst deletion and 98.9\% under truncation, whereas PRC obtains only 45.9\% and 39.7\%, respectively. \method is also highly robust to the emoji attack, retaining 95.5\% TPR@1\%FPR, compared with at most 1.9\% for the baselines. 
These results support our central design intuition that local algebraic evidence can remain detectable when global token-to-codeword alignment is disrupted.

\paragraph{Rewriting-based attacks.} \method achieves 86.6\% TPR@1\%FPR under synonym substitution, substantially outperforming the strongest baseline result of 69.5\%. However, strong paraphrasing remains challenging for all evaluated methods. Although \method achieves the highest AUROC of 53.7\%, its TPR@1\%FPR is only 8.3\%, indicating that extensive semantic rewriting can remove most of the detectable watermark signal. Overall, \method is most effective under structured deletion, token substitution, and severe tokenization disruption, while paraphrasing remains its main limitation.

\paragraph{Implementation details.} We use $q=17$, $d=3$, $m=12$, and $e=2$. The detector uses 500 pilot trials and retains up to 8 strides. We set $\delta=5$ based on perplexity relative to human and baselines.

\begin{table*}[t]
    \centering
    \small
    \setlength{\tabcolsep}{7pt}
    \renewcommand{\arraystretch}{1.12}

    \begin{adjustbox}{max width=0.7\textwidth}
    \begin{tabular}{
        @{}
        l
        l
        l
        S[table-format=2.1]
        S[table-format=2.1]
        S[table-format=2.1]
        @{}
    }
        \toprule
        \textbf{Model}
        & \textbf{Dataset}
        & \textbf{Attack}
        & \multicolumn{1}{c}{\textbf{TPR@1\%FPR}}
        & \multicolumn{1}{c}{\textbf{TPR@5\%FPR}}
        & \multicolumn{1}{c}{\textbf{AUROC}} \\
        \midrule

        \multirow{6}{*}{Llama-3.1-8B-Instruct}
        & \multirow{3}{*}{C4}
        & Clean      & 99.8 & 99.8 & 99.9 \\
        &
        & Burst Del. & 98.1 & 98.8 & 99.0 \\
        &
        & Syn. Sub.  & 86.6 & 94.2 & 98.5 \\

        \cmidrule(lr){2-6}

        &
        \multirow{3}{*}{ELI5}
        & Clean      & 99.5 & 99.5 & 99.8 \\
        &
        & Burst Del. & 95.5 & 95.7 & 97.7 \\
        &
        & Syn. Sub.  & 90.8 & 92.7 & 94.9 \\

        \midrule

        \multirow{6}{*}{OPT-1.3B}
        & \multirow{3}{*}{C4}
        & Clean      & 87.1 & 87.6 & 93.5 \\
        &
        & Burst Del. & 83.6 & 86.8 & 90.6 \\
        &
        & Syn. Sub.  & 81.4 & 83.7 & 87.3 \\

        \cmidrule(lr){2-6}

        &
        \multirow{3}{*}{ELI5}
        & Clean      & 89.6 & 90.4 & 94.7 \\
        &
        & Burst Del. & 87.6 & 87.8 & 92.9 \\
        &
        & Syn. Sub.  & 85.7 & 88.3 & 89.6 \\

        \bottomrule
    \end{tabular}
    \end{adjustbox}

    \caption{Cross-model and cross-dataset detection performance of \method on clean text (\textsc{Clean}) and under burst deletion (\textsc{Burst Del.}) and synonym substitution (\textsc{Syn. Sub.}). We report TPR@1\%FPR, TPR@5\%FPR, and AUROC, all expressed as percentages, on C4 and ELI5 with Llama-3.1-8B-Instruct and OPT-1.3B.}
    \label{tab:generalization}
    \vspace{-4mm}
\end{table*}

\subsection{Analysis}



\subsubsection{Cross Model and Dataset Generalization}
\label{sec:cross}

Table~\ref{tab:generalization} evaluates \method across two model families and two generation settings on clean text, burst deletion, and synonym substitution. With Llama-3.1-8B-Instruct on C4, \method achieves 99.8\% TPR@1\%FPR on clean text and retains 98.1\% and 86.6\% under burst deletion and synonym substitution. The same pattern holds on ELI5, where the corresponding TPRs are 99.5\%, 95.5\%, and 90.8\%. These results show that the watermark remains strongly detectable when both the generation domain and the post-processing operation change.

\method also transfers consistently to OPT-1.3B. On C4, its TPR@1\%FPR is 87.1\% on clean text, 83.6\% under burst deletion, and 81.4\% under synonym substitution. On ELI5, the corresponding results are 89.6\%, 87.6\%, and 85.7\%. Although the absolute detection strength varies across base models, the robustness pattern is preserved in all four model--dataset combinations. 

Overall, these results indicate that \method exhibits consistent robustness across the evaluated model families and generation domains.

\begin{figure}
    \centering
    \includegraphics[width=0.9\linewidth]{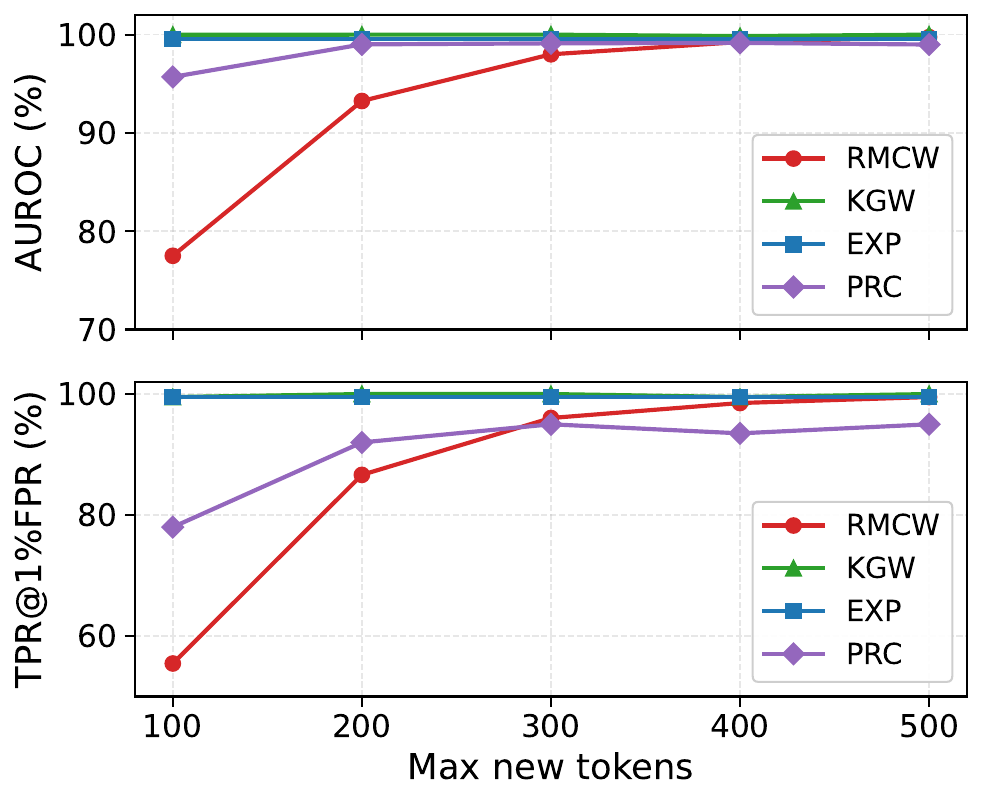}
    \vspace{-4mm}
    \caption{Effect of generation length on clean-text watermark detection on C4 with Llama-3.1-8B-Instruct. AUROC (top) and TPR@1\%FPR (bottom) are reported for \method, KGW, EXP, and PRC as the maximum number of generated tokens increases from 100 to 500.}
    \label{fig:token-score}
\end{figure}

\subsubsection{Effect of Generation Length}

Figure~\ref{fig:token-score} studies how detection performance changes with the maximum generation length. \method benefits consistently from longer outputs. Its AUROC increases from 77.5\% at 100 tokens to above 93.3\% at 200 tokens and approaches 98.0\% from 300 tokens onward. TPR@1\%FPR follows the same trend, increasing from 55.5\% at 100 tokens to 87.6\% at 200 tokens and 96.0\% at 300 tokens before approaching 99.8\% at 500 tokens.

This scaling behavior follows directly from the local detection design. Longer generations provide more induced symbols and more candidate windows, increasing the probability that the detector observes sufficient surviving RS structure. Aggregating evidence over these windows then produces stronger separation between watermarked and unwatermarked text. This observation confirms that local algebraic evidence becomes increasingly reliable as more text is observed. More results and analysis are provided in Appendix~\ref{app:further_analysis}.

\section{Conclusion}

We introduce \method, a deletion-robust Reed--Muller code watermarking method that addresses synchronization loss by testing local algebraic structure. By embedding Reed--Muller-structured symbols and detecting surviving Reed--Solomon consistency, \method remains detectable when post-processing attacks disrupt global token positions. Experiments across models and datasets demonstrate strong clean-text detectability and robustness under deletion and rewriting attacks, establishing local algebraic consistency as a promising design for robust LLM watermarking.

\section*{Limitations}
\paragraph{Robustness boundary.}
\method is designed to detect surviving local algebraic structure after deletion-oriented post-processing, rather than to provide universal robustness against arbitrary text transformations. Its effectiveness therefore depends on some local token-derived structure remaining in the observed text. Strong document-level paraphrasing can replace most of the original lexical realization and sentence structure, leaving little directly surviving watermark evidence. Independently distributed random deletion is also more challenging than burst deletion or truncation because it introduces synchronization changes throughout the sequence.

\paragraph{Dependence on text length.}
The detector benefits substantially from longer text. Local Reed--Solomon consistency testing requires enough observed symbols to form candidate windows and accumulate statistical evidence. As a result, short generations or heavily shortened fragments may contain insufficient evidence for reliable detection at low false-positive rates. Improving short-text detection, potentially through more sample-efficient local tests or evidence aggregation across multiple passages, remains an important direction.

\paragraph{Calibration and deployment.}
Our implementation relies on several calibrated and configuration-dependent quantities. The stride-specific null success rates are estimated offline, and the resulting $p_{\mathrm{det}}$ is an approximate detection score rather than an exactly calibrated $p$-value. Changes in the base model, tokenizer, vocabulary, text domain, generation length, or detector parameters may alter the null distribution and require recalibration. Although the empirical low-FPR metrics are computed consistently in our experiments, practical deployment would require validation on the target data distribution.

\section*{Acknowledgment}
The research is supported by Shanghai Qi Zhi Institute Innovation Program.

\bibliography{custom}

\appendix

\section{LLM Usage Statement}
\label{app:llm_usage}
We use LLMs only as writing and coding assistants during the preparation of this work. In particular, they are used to help refine parts of the codebase and improve the clarity and presentation of the manuscript. All core ideas, theoretical formulations, methodological design, experimental setup, initial code implementation, and analysis are developed and verified by the authors.

\section{Proof}
\label{app:proof}

\subsection{Soundness of Berlekamp--Welch verification}

\begin{lemma}[Soundness of Berlekamp--Welch verification]
Let $\mathbb{F}_q$ be a finite field, and let
\[
S=\{\alpha_1,\ldots,\alpha_m\}\subseteq \mathbb{F}_q
\]
be a set of $m$ distinct evaluation points. Let
\[
\mathrm{RS}_{q,d}(S)
=
\left\{
\big(f(\alpha_1),\ldots,f(\alpha_m)\big)
\;\middle|\; \deg f\le d
\right\}
\]
be the Reed--Solomon code of degree at most $d$ on $S$.

Suppose the Berlekamp--Welch decoder is run with error radius $e$, and its output is accepted only if the recovered polynomial agrees with the received word in at least $m-e$ positions. If
\[
2e < m-d,
\]
then for a uniformly random word
\[
Y\sim \mathbb{F}_q^m,
\]
we have

\[
\Pr\left[
Y \text{ is accepted}
\right]
=
q^{-(m-d-1)}
\sum_{i=0}^{e}\binom{m}{i}(q-1)^i .
\]
In particular, for large $q$ this is approximately
\[
\binom{m}{e}q^{d+1+e-m}.
\]
\end{lemma}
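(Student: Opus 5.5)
The plan is to turn the acceptance probability into a counting problem: show that the accepted words are exactly the union of Hamming balls of radius $e$ around the codewords of $\mathrm{RS}_{q,d}(S)$, that these balls are pairwise disjoint, and then count.

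\textbf{Step 1: characterize acceptance.} I would show that $Y$ is accepted if and only if $Y$ lies within Hamming distance $e$ of some codeword.
\begin{itemize}
\item If $Y$ is within distance $e$ of the codeword of some $f$ with $\deg f\le d$, then Proposition~1 applies, since $2e<m-d$. So Berlekamp--Welch returns $f$, and the verification step passes because $f$ agrees with $Y$ in at least $m-e$ positions.
\item Conversely, if $Y$ is accepted, the returned polynomial has degree at most $d$ and agrees with $Y$ in at least $m-e$ positions. Its evaluation vector is therefore a codeword within distance $e$ of $Y$.
\item If the decoder returns nothing, or returns a polynomial that fails verification, $Y$ is rejected. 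This case only arises when $Y$ is outside every ball, so it is consistent with the characterization.
\end{itemize}

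\textbf{Step 2: disjointness and counting.}
\begin{itemize}
\item Two distinct polynomials of degree at most $d$ agree on at most $d$ points. Hence distinct codewords differ in at least $m-d$ coordinates, so the minimum distance of the code is at least $m-d>2e$.
\item By the triangle inequality, the radius-$e$ balls around distinct codewords are pairwise disjoint.
\item Because $d<m$, the evaluation map $f\mapsto(f(\alpha_1),\ldots,f(\alpha_m))$ is injective on polynomials of degree at most $d$. Therefore $|\mathrm{RS}_{q,d}(S)|=q^{d+1}$.
\item Each ball has exactly $\sum_{i=0}^{e}\binom{m}{i}(q-1)^i$ elements: choose the $i$ positions that differ, then one of the $q-1$ other symbols at each.
\item Dividing the total count by $q^m$ gives
\[
q^{d+1-m}\sum_{i=0}^{e}\binom{m}{i}(q-1)^i ,
\]
which is the claimed formula.
\end{itemize}

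\textbf{Step 3: the asymptotic form.} For fixed $m$ and $e$ and large $q$, the $i=e$ term dominates. The ratio of the remaining terms to it is $O(1/q)$, and $(q-1)^e=q^e(1+O(1/q))$. So the probability is $\binom{m}{e}q^{d+1+e-m}(1+O(1/q))$.

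\textbf{Main obstacle.} The only delicate point is Step 1. Both directions of the equivalence need care: the decoder behaves as Proposition~1 guarantees inside the balls, and the explicit verification rules out false acceptances outside them. This is exactly why the lemma requires the agreement check. The disjointness argument in Step 2 is what makes the count exact rather than merely an upper bound.
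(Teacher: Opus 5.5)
Your proposal is correct and follows essentially the same route as the paper. Both arguments characterize acceptance as membership in a radius-$e$ Hamming ball around some codeword, use the minimum distance $m-d>2e$ to make these balls disjoint, divide $q^{d+1}\sum_{i=0}^{e}\binom{m}{i}(q-1)^i$ by $q^m$, and obtain the asymptotic form from the same dominant-term argument. Your Step~1 is slightly more careful than the paper: the paper simply asserts the acceptance characterization, whereas you invoke Proposition~1 for the forward direction and the verification step for the converse.
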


\begin{proof}
The Berlekamp--Welch decoder with verification accepts a received word
\[
Y=(Y_1,\ldots,Y_m)\in \mathbb{F}_q^m
\]
if and only if there exists a polynomial $f\in \mathbb{F}_q[x]$ with
\[
\deg f\le d
\]
such that
\[
\left|
\left\{
i\in[m]\;:\; Y_i\neq f(\alpha_i)
\right\}
\right|
\le e.
\]
Equivalently, $Y$ must lie within Hamming distance at most $e$ from some Reed--Solomon codeword.

First, the number of Reed--Solomon codewords is
\[
|\mathrm{RS}_{q,d}(S)|=q^{d+1},
\]
because a polynomial of degree at most $d$ is specified by its $d+1$ coefficients, and since $m>d$, distinct such polynomials give distinct evaluations on $S$.

For a fixed codeword $c\in \mathrm{RS}_{q,d}(S)$, the number of words within Hamming distance at most $e$ from $c$ is
\[
\sum_{i=0}^{e}\binom{m}{i}(q-1)^i.
\]
Indeed, to form a word at distance exactly $i$ from $c$, we choose the $i$ corrupted positions in $\binom{m}{i}$ ways, and at each chosen position choose one of the $q-1$ symbols different from the original symbol.

It remains to justify that these Hamming balls do not overlap. The Reed--Solomon code has minimum distance
\[
\Delta = m-d.
\]
This follows because if $f$ and $g$ are two distinct polynomials of degree at most $d$, then $f-g$ is a nonzero polynomial of degree at most $d$, and hence has at most $d$ roots. Therefore, $f$ and $g$ can agree on at most $d$ evaluation points, so their corresponding codewords differ in at least $m-d$ positions.

Since
\[
2e < m-d,
\]
the Hamming balls of radius $e$ around distinct codewords are disjoint. Therefore, the total number of accepted words is exactly
\[
q^{d+1}
\sum_{i=0}^{e}\binom{m}{i}(q-1)^i.
\]
Since $Y$ is uniformly random over $\mathbb{F}_q^m$, the acceptance probability is this quantity divided by $q^m$:
\[
\Pr[Y\text{ is accepted}]
=
\frac{
q^{d+1}
\sum_{i=0}^{e}\binom{m}{i}(q-1)^i
}{
q^m
}.
\]
This proves the exact formula.

Finally, when $q$ is large and the largest term in the Hamming ball volume dominates, we have
\[
\sum_{i=0}^{e}\binom{m}{i}(q-1)^i
\approx
\binom{m}{e}q^e.
\]
Thus
\[
\begin{split}
\Pr[Y\text{ is accepted}]
&\approx
q^{d+1-m}\binom{m}{e}q^e \\
&=
\binom{m}{e}q^{d+1+e-m}.
\end{split}
\]
\end{proof}

\subsection{Proof of Theorem 1}
\label{app:theorem1}
\begin{proof}
Under the noisy watermark distribution $\mathcal{T}_{\mathrm{wm}}$, BW decoding succeeds whenever the number of substitutions does not exceed $e$. Since substitutions occur independently with probability $\rho$, the number of corruptions follows a binomial distribution:
\[
X\sim\mathrm{Binomial}(m,\rho).
\]

Thus,
\[
P_{\mathrm{acc}}^{\mathrm{wm}}
=
\Pr[X\le e]
=
\sum_{i=0}^{e}
\binom{m}{i}
\rho^i(1-\rho)^{m-i}.
\]

According to Lemma 1, the acceptance probability of the null distribution $\mathcal{T}_{\mathrm{null}}$ is
\[
\begin{split}
P_{\mathrm{acc}}^{\mathrm{null}}
&=
\frac{
q^{d+1}
\sum_{i=0}^{e}
\binom{m}{i}(q-1)^i
}{
q^m
}\\
&=
\frac{
\sum_{i=0}^{e}
\binom{m}{i}(q-1)^i
}{
q^{m-d-1}
}.
\end{split}
\]

Therefore, whenever the acceptance probability under $\mathcal{T}_{\mathrm{wm}}$ exceeds that under $\mathcal{T}_{\mathrm{null}}$ by a constant factor, the two distributions become distinguishable.
\end{proof}

\section{Detailed Experimental Setup}
\label{app:setup}

\subsection{Datasets}
\label{app:datasets}
We evaluate \method on C4 RealNews and ELI5, which represent two complementary long-form generation settings. For every example, the prompt is provided to the base language model, which generates a continuation with watermarking. The preprocessed MarkLLM files pair each prompt with a human-written continuation. We use these human continuations as the negative set for detection, keeping the negative examples fixed across watermarking methods and attack settings.

\paragraph{C4 RealNews.}
C4 RealNews~\citep{raffel2020exploring} is used for open-ended, news-style text continuation. Its RealNews-like configuration contains 13,804,817 training examples and 13,855 validation examples. In our experiment, we use 5,000 samples from the validation set to evaluate all the watermarking methods. Each example provides a natural-text prefix as the prompt and a human-written continuation. C4 serves as the primary dataset for the main robustness evaluation, the deletion-rate analysis, the quality--detectability analysis, and the generation-length analysis. Its human continuations are also used as the human-text reference in the perplexity comparison.

\paragraph{ELI5.}
ELI5~\citep{fan2019eli5} is used for long-form question answering and contains approximately 270,000 question threads. In our experiment, we use 1,000 samples to evaluate \method. Each question is provided as the prompt, and the model generates an explanatory answer. This setting complements document continuation by testing the watermark on open-ended responses conditioned on information-seeking questions.

\subsection{Models}
\label{app:models}
We conduct experiments with OPT-1.3B~\citep{zhang2022opt} and Llama-3.1-8B-Instruct~\citep{grattafiori2024Llama}. The models differ in scale,
architecture family, and tokenizer design, allowing us to evaluate whether \method transfers across different language-model distributions. Generation is implemented using Hugging Face Transformers. All methods use the same base model, tokenizer, prompt set, generation length, and attack configurations within each experimental condition.

\paragraph{OPT-1.3B.}
OPT-1.3B is a 1.3-billion-parameter decoder-only language model from the Open Pre-trained Transformer family. OPT models have been widely used in prior LLM watermarking evaluations, making OPT-1.3B a useful reference model for comparison with existing work. Its relatively compact scale also provides a distinct generation distribution for evaluating the portability of the watermarking method.

\paragraph{Llama-3.1-8B-Instruct.}
Llama-3.1-8B-Instruct is an instruction-tuned 8-billion-parameter model from the Llama 3.1 family. Compared with OPT-1.3B, it provides a larger and more recent model with a different tokenizer and stronger long-form generation capability. We include it to evaluate \method under a modern instruction-following model and to test whether the same watermarking construction remains effective across changes in model scale and vocabulary.

\subsection{Metrics}
\label{app:metrics}

\paragraph{Detection metrics.} 
The positive set contains watermarked generations, optionally after an attack, and the negative set contains the corresponding human-written continuations from the same dataset split. Given a detection threshold, the true-positive rate and false-positive rate are
\begin{equation}
    \mathrm{TPR}=\frac{\mathrm{TP}}
    {\mathrm{TP}+\mathrm{FN}},
    \mathrm{FPR}=\frac{\mathrm{FP}}
    {\mathrm{FP}+\mathrm{TN}}.
\end{equation}
Following \citet{gu2026ssg}, we report TPR@1\%FPR and TPR@5\%FPR in \Sref{sec:experiment}. And we further report AUROC. AUROC measures the overall ranking
of watermarked and negative samples across all thresholds. For a target
false-positive rate $\gamma$, we compute
\begin{equation}
    \mathrm{TPR}@\gamma\mathrm{FPR}
    = \max_{t:\,\mathrm{FPR}(t)\leq\gamma}
      \mathrm{TPR}(t),
\end{equation}
which selects the operating point with the highest empirical TPR while keeping the empirical FPR at or below $\gamma$. All detection metrics are reported as percentages, and higher values indicate better performance.

\paragraph{Generation quality.}
We use perplexity to study the quality--detectability trade-off. For a generated continuation $y=(y_1,\ldots,y_T)$ conditioned on prompt $x$, perplexity under a fixed reference language model is
\begin{equation}
    \mathrm{PPL}(y\mid x)
    =\exp\!\left(
        -\frac{1}{T}\sum_{t=1}^{T}
        \log P_{\mathrm{ref}}(y_t\mid x,y_{<t})
    \right).
\end{equation}
Lower perplexity indicates that the continuation receives higher likelihood under the reference model.

\subsection{Implementation Details}
\label{app:implementation}

\subsubsection{Baseline Methods}
\label{app:imp-baseline}
We compare against KGW, EXP, and PRC, with PRC serving as the closest coding-based baseline to \method. KGW and EXP use the implementations provided by MarkLLM~\citep{pan2024markllm}, while we reuse an open-source implementation of PRC\footnote{https://github.com/patrickrchao/watermarking-llms}. Within each experimental condition, all methods use the same prompt split, base model, tokenizer, maximum generation length, attack configuration, and metric computation. Low-FPR operating points are obtained from the empirical score distributions using the same procedure for every method.

\paragraph{KGW.}
KGW is a token-level green-list watermark that embeds a statistical bias during generation. At each decoding step, the preceding token context is combined with a secret key to seed a pseudorandom partition of the vocabulary. A fraction $\gamma$ of the vocabulary forms the green list, and a constant logit bias $\delta$ is added to its tokens before sampling. We use $\gamma=0.5$, $\delta=2.0$, and \texttt{prefix\_length=1}.

During detection, the continuation is re-tokenized and the keyed green list is reconstructed at every scored position. If $G$ of the $T$ scored tokens belong to their corresponding green lists, KGW computes
\begin{equation}
    z=\frac{G-\gamma T}{\sqrt{T\gamma(1-\gamma)}}.
\end{equation}
The configured fixed decision threshold is $z=4.0$. For AUROC and TPR@FPR, the raw $z$-score is used as the continuous detector score, with larger values indicating stronger watermark evidence.

\paragraph{EXP.}
EXP is a sampling-based watermark that modifies token selection without adding a fixed logit bias. At each generation step, it derives a keyed pseudorandom vector $u\in(0,1)^{|\mathcal{V}|}$ from the preceding tokens and uses this vector together with the model probabilities to perform exponential sampling. Our configuration uses \texttt{prefix\_length=4}.

Detection reconstructs the same pseudorandom vector at every scored position. For an observed token $x_i$, EXP accumulates the statistic
\begin{equation}
    S=\sum_i \log\frac{1}{1-u_i(x_i)}.
\end{equation}
Under the null hypothesis, the tail probability of $S$ is computed using a Gamma distribution whose shape is the number of scored tokens. The configured fixed threshold is $p=10^{-4}$. For the common ROC evaluation, we use $-\log_{10}(p)$ as the continuous detector score, so larger values indicate stronger watermark evidence.

\paragraph{PRC.}
Our PRC baseline is implemented for a pseudorandom-code watermark. The generator first applies a fixed keyed permutation to the vocabulary and represents the permuted token positions in a binary prefix space. A keyed pseudorandom bit sequence is encoded by a small linear error-correcting code, and the resulting encoded bits are used to perturb the model's next-token distribution. We use a fixed vocabulary permutation, allowing the detector to reconstruct the required mapping from the final text alone.

The implementation uses a lightweight Low-Density Parity-Check (LDPC)-style linear code. It constructs a regular parity-check matrix over $\mathbb{F}_2$, then obtains a generator matrix from the null space of the parity-check matrix.

During detection, the continuation is re-tokenized, mapped through the inverse vocabulary permutation, and converted into binary token-position representations. The detector extracts leading bits from token windows and compares them with the expected encoded key bits. It applies a one-sided binomial test against a null match probability of $1/2$.

\begin{table*}[t]
\centering
\small
\setlength{\tabcolsep}{8pt}
\renewcommand{\arraystretch}{1.08}
\begin{tabular}{@{}lccccc@{}}
\toprule
\textbf{Detector variant}
& \textbf{Clean}
& \textbf{Rand. Del. (0.2)}
& \textbf{Syn. Sub. (0.5)}
& \textbf{Burst Del. (0.5)}
& \textbf{Emoji (1.0)} \\
\midrule
Full \method
& \textbf{99.8}
& 90.0
& \textbf{86.6}
& \textbf{98.1}
& \textbf{95.5} \\

Fixed contiguous ($\tau=1$)
& 99.0
& \textbf{92.6}
& 84.3
& \textbf{98.1}
& 86.1 \\

Exact RS consistency ($e=0$)
& 94.1
& 29.7
& 25.8
& 74.3
& 26.2 \\
\bottomrule
\end{tabular}
\caption{
Ablation of the \method detector on C4 with Llama-3.1-8B-Instruct.
We report TPR@1\%FPR, expressed as percentages.
The full detector uses adaptive pilot selection, multiple candidate strides, and Berlekamp--Welch consistency testing with $e=2$.
The fixed-contiguous variant uses only stride $\tau=1$, while the exact-RS variant requires zero symbol mismatches.
The value in parentheses denotes the attack rate.
}
\label{tab:detector-ablation}
\vspace{-2mm}
\end{table*}

\subsubsection{\method Implementation}
\label{app:drw}
For implementation of \method, the generator uses a bivariate Reed--Muller construction over $\mathbb{F}_{17}$ with field size $q=17$, degree bound $d=3$, and local window length $m=12$. We use logit bias $\delta=5.0$ for our experiments.

During detection, the suspect text is tokenized and mapped to its keyed bin symbols. The detector samples local windows at candidate strides and checks whether each window agrees, up to a bounded number of symbol errors, with the evaluations of a univariate polynomial of degree at most $d$. We use a Berlekamp--Welch-style consistency test with \texttt{rs\_errors=2}, allowing up to two mismatches in each length-$m$ window.

The detector uses adaptive stride selection. It first performs 500 consistency trials over candidate strides, then retains at most 8 promising strides for the full evaluation. For each retained stride, the number of successful windows is converted into an approximate binomial-tail score using a stride-specific null success rate. The best score is adjusted for the number of tested strides to form $p_{\mathrm{det}}$.

The stride-specific null rates are calibrated offline. These rates parameterize the binomial null model used by the detector. Accordingly, we refer to $p_{\mathrm{det}}$ as an approximate detection score rather than an exact calibrated $p$-value.

\subsubsection{Computational Resources}
All experiments were conducted on a Linux server running Ubuntu 20.04.6 LTS. The machine is equipped with two AMD EPYC 7H12 64-Core processors, with 255 logical CPUs, and 503~GiB of system memory. GPU-accelerated experiments were run on a single NVIDIA A40 GPU with 46~GB of GPU memory.

\begin{table*}[t]
\centering
\small
\setlength{\tabcolsep}{10pt}
\renewcommand{\arraystretch}{1.12}

\begin{threeparttable}
\begin{tabular}{@{}lcccc@{}}
\toprule
\textbf{Method}
&
\textbf{WM Pass@1}
&
\textbf{AUROC}
&
\textbf{TPR@1\%FPR}
&
\textbf{TPR@5\%FPR}
\\
\midrule

KGW
& 35.7
& 85.8
& 12.8
& 48.6
\\

EXP
& \textbf{36.6}
& \textbf{90.7}
& 37.0
& 63.4
\\

PRC
& 13.3
& 79.7
& 30.1
& 48.3
\\

\method{} (Ours)
& 34.5
& 87.2
& \textbf{37.2}
& \textbf{74.6}
\\

\bottomrule
\end{tabular}
\end{threeparttable}
\caption{
Results on the MBPP benchmark with Llama-3.1-8B-Instruct.
The non-watermarked model obtains a Pass@1 of \textbf{46.7\%}.
All remaining values are percentages, and higher values are better.
}
\label{tab:code-generation}
\vspace{-2mm}
\end{table*}
\begin{table*}[t]
\centering
\normalsize
\renewcommand{\arraystretch}{1.12}

\begin{tabular*}{0.9\textwidth}{
    @{\extracolsep{\fill}}
    lccccc
    @{}
}
\toprule
\textbf{Attack}
& \textbf{10\%}
& \textbf{20\%}
& \textbf{30\%}
& \textbf{40\%}
& \textbf{50\%} \\
\midrule
Rand. Del. & 98.8 & 90.0 & 79.5 & 64.9 & 39.1 \\
Burst Del. & \textbf{99.8} & 99.6 & 98.8 & 99.1 & 98.1 \\
Trunc.     & \textbf{99.8} & \textbf{99.8} & \textbf{99.8}
           & \textbf{99.5} & \textbf{98.9} \\
Emoji      & 98.9 & 98.2 & 97.4 & 97.9 & 96.7 \\
Syn. Sub.  & 99.0 & 98.2 & 95.1 & 90.9 & 86.6 \\
\bottomrule
\end{tabular*}

\caption{Detection performance of \method under varying attack rates. We report TPR@1\%FPR, expressed as percentages, on C4 with Llama-3.1-8B-Instruct. For random deletion, burst deletion, and truncation, the attack rate denotes the fraction of removed content. For emoji attack and synonym substitution, it denotes the emoji insertion probability and maximum word-replacement fraction, respectively. The best result at each attack rate is shown in bold.}
\label{tab:deletion_rate}
\vspace{-3mm}
\end{table*}

\subsection{Attacks}
\label{app:attacks}
We evaluate post-processing deletion and rewriting attacks. In the clean setting, denoted \textsc{Clean}, the detector is applied directly to the original generated continuation.

\paragraph{Random deletion.}
Random deletion (\textsc{Rand. Del.}) splits a continuation into whitespace-separated words and independently deletes each word with probability $r$. The main robustness benchmark uses $r=0.2$.

\paragraph{Burst deletion.}
Burst deletion (\textsc{Burst Del.}) removes contiguous semantic units. It first splits the continuation into paragraphs and falls back to sentence-level units when only one paragraph is available. It then removes a random subset of units according to the attack ratio while retaining at least one unit. The main robustness benchmark uses $r=0.5$.

\paragraph{Partial truncation.}
Partial truncation (\textsc{Trunc.}) retains one randomly selected contiguous span and removes the remaining text. With attack ratio $r$, approximately a $1-r$ fraction of the original words is retained. The main robustness benchmark uses $r=0.5$.

\paragraph{Emoji attack.}
The emoji attack (\textsc{Emoji.}) inserts a randomly sampled emoji after each word with independent probability $r$. The main robustness benchmark uses $r=1.0$, which inserts an emoji after every word and substantially changes tokenization while preserving the visible lexical content. We group it with deletion-based attacks in this work because they both modify the length of token sequence and shift the positions of all subsequent tokens, inducing synchronization loss.

\paragraph{Synonym substitution.}
Synonym substitution (\textsc{Syn. Sub.}) uses WordNet~\citep{miller1995wordnet} to identify replaceable words, randomly selects up to an $r$ fraction of the words, and replaces each selected word with a randomly sampled synonym lemma when available. The main robustness benchmark uses $r=0.5$.

\paragraph{Paraphrasing.}
Paraphrasing (\textsc{Para.}) uses the document-level editor provided by MarkLLM~\citep{pan2024markllm}. The reported results use its DIPPER-based paraphraser~\citep{krishna2023paraphrasing}.

\section{Further Analysis}
\label{app:further_analysis}

\subsection{Ablation Study}
\label{app:ablation}

We study the contributions of two components in the \method detector: adaptive search over candidate strides and tolerance to symbol mismatches in the local Reed--Solomon consistency test. The full detector uses a pilot stage to identify promising candidate strides, searches multiple strides, and applies Berlekamp--Welch decoding with a mismatch budget of $e=2$. The \emph{fixed contiguous} variant restricts the detector to contiguous windows with $\tau=1$, while retaining the same mismatch budget. The \emph{exact RS} variant keeps the candidate-stride search but requires exact degree-$d$ consistency by setting $e=0$.

Table~\ref{tab:detector-ablation} shows that the full detector obtains the best or tied-best performance in four of the five settings. Compared with restricting the detector to contiguous windows, the full method improves TPR by 2.3 points under synonym substitution and by 9.4 points under emoji insertion, while maintaining comparable performance on clean text and burst deletion. The particularly large gain under the emoji attack indicates that searching multiple candidate strides is useful when post-processing substantially changes the tokenization and disrupts the local spacing of the induced symbol sequence.

For burst deletion, both variants achieve 98.1\% TPR. This is consistent with the structure of the attack: although a contiguous region is removed, the remaining text can still contain long, unchanged spans in which stride-$1$ windows preserve local algebraic structure. Interestingly, the fixed-contiguous variant performs 2.6 points better under random deletion. This result indicates that adaptive stride search is not uniformly advantageous across all deletion patterns. Its empirical benefit depends on how the attack alters the surviving symbol sequence.

The mismatch budget is substantially more important. Requiring exact RS consistency reduces clean text TPR from 99.8\% to 94.1\%, even without post-processing. This decrease arises because logit biasing encourages, but does not force, each generated token to belong to its target vocabulary bin. The realized symbol sequence therefore contains occasional mismatches with the target RM structure even before an attack.

The effect becomes much larger after post-processing. Compared with the full detector, exact consistency decreases TPR by 60.3 points under random deletion, 60.8 points under synonym substitution, 23.8 points under burst deletion, and 69.3 points under emoji insertion. These results demonstrate that bounded-error consistency testing is essential for absorbing both intrinsic generation noise and symbol mismatches introduced by text editing. Overall, the ablation supports the use of Berlekamp--Welch decoding as a central component of the detector, while showing that adaptive multi-stride search provides additional, attack-dependent robustness.

\begin{table*}[t]
\centering
\small
\setlength{\tabcolsep}{3.8pt}
\renewcommand{\arraystretch}{1.08}
\begin{tabular}{@{}lccc@{}}
\toprule
\textbf{Method}
& \textbf{Gen. speed (tokens/s)} $\uparrow$
& \textbf{Detect. time (s/text)} $\downarrow$
& \textbf{Avg. gen. time (s/sample)} $\downarrow$ \\
\midrule
KGW
& \textbf{59.8}
& 0.95
& \textbf{8.2} \\
EXP
& 15.0
& 0.26
& 33.3 \\
PRC
& 11.1
& \textbf{0.01}
& 42.4 \\
\method\ (Ours)
& 16.4
& 0.57
& 30.5 \\
\bottomrule
\end{tabular}
\caption{
Wall-clock efficiency of the evaluated watermarking methods under a
maximum generation length of 500 tokens. Generation speed is measured
in generated tokens per second, detection time is the average latency
per input text, and generation time is the average latency per sample.
All methods are measured under the same hardware and runtime
configuration. Higher generation speed and lower latency are better.
}
\label{tab:efficiency}
\vspace{-2mm}
\end{table*}

\subsection{Code Generation Task}
\label{app:code-generation}

We evaluate whether the watermarking methods transfer beyond natural-language generation to code generation using the Mostly Basic Python Problems (MBPP) benchmark~\citep{austin2021program} and Llama-3.1-8B-Instruct.

\paragraph{Dataset and evaluation setup.} MBPP contains 974 crowdsourced Python programming tasks designed to be solvable by entry-level programmers. Each task provides a natural-language problem description, a reference solution, and three executable test cases for checking functional correctness. The tasks cover common programming concepts, including numerical operations, string and list manipulation, control flow, and basic use of the Python standard library. We report Pass@1 as the percentage of generated programs that pass all provided test cases, together with AUROC, TPR@1\%FPR, and TPR@5\%FPR for watermark detection.

\paragraph{Code-generation utility.} As shown in Table~\ref{tab:code-generation}, all evaluated watermarking methods reduce Pass@1 relative to the unwatermarked model. EXP obtains the highest watermarked Pass@1 of 36.6\%, followed by KGW at 35.7\% and \method at 34.5\%. \method therefore incurs a 12.2-point reduction relative to the unwatermarked model, while remaining within 2.1 points of the strongest watermarked result. PRC exhibits a substantially larger utility decrease, achieving only 13.3\% Pass@1. These results indicate that imposing a watermark can affect exact functional correctness in code generation, although \method preserves utility at a level comparable to KGW and EXP.

\paragraph{Watermark detectability.} EXP achieves the highest overall AUROC of 90.7\%, while \method obtains 87.2\%, outperforming KGW and PRC by 1.4 and 7.5 points, respectively. The advantage of \method becomes more evident at low false-positive rates. At 1\% FPR, \method achieves the highest TPR of 37.2\%, slightly exceeding EXP at 37.0\% and outperforming KGW and PRC by 24.4 and 7.1 points. At 5\% FPR, \method reaches 74.6\% TPR, improving over EXP by 11.2 points and over KGW and PRC by 26.0 and 26.3 points, respectively.

Overall, \method provides the strongest low-FPR watermark detection
among the evaluated methods while maintaining code-generation utility
comparable to KGW and EXP. At the same time, the decrease from the unwatermarked Pass@1 highlights a utility--detectability trade-off and motivates future work on watermark injection strategies that better preserve exact program correctness.

\subsection{Deletion Rate}
We further examine how \method responds to increasing post-processing strength. Table~\ref{tab:deletion_rate} reports TPR@1\%FPR as the attack
rate increases from 10\% to 50\%. Across structured deletion attacks, \method remains highly stable. Under burst deletion, TPR stays above 98\% at every evaluated rate and reaches 98.1\% even when half of the semantic units are removed. Truncation has an even smaller effect, with TPR decreasing only from 99.8\% to 98.9\%. TPR remains above 96\% across all evaluated emoji insertion rates. These results support the central design intuition of \method. Although a large contiguous portion may be removed, the surviving text preserves sufficiently many ordered local subsequences for the detector to identify Reed--Solomon consistency.

Random deletion produces a more gradual decline as independently removed words disperse synchronization changes throughout the continuation. Nevertheless, \method retains 98.8\% TPR at a 10\% deletion rate and 90.0\% TPR at a 20\% deletion rate. The comparison with burst deletion and truncation demonstrates the benefit of searching for surviving local algebraic structure rather than requiring recovery of a globally aligned codeword.

The detector also remains stable when the attack modifies rather than removes tokens. Under synonym substitution, TPR is 95.1\% at a 30\% replacement rate and remains 86.6\% when up to half of the words are replaced. Together, these results demonstrate that \method preserves strong detectability across a wide range of perturbation types and attack strengths.

\subsection{Watermarking Efficiency}
\label{app:efficiency}

\paragraph{Measurement setup.}
We measure all methods under the same hardware and runtime configuration. Experiments are conducted using one NVIDIA A40 GPU with 46\,GB of available memory and two AMD EPYC 7H12 processors, providing 128 physical cores and 255 logical CPUs. We set \texttt{OMP\_NUM\_THREADS=32}. The generation experiments use a maximum output length of 500 tokens. Generation throughput and per-sample generation latency are measured over the same benchmark examples.

\paragraph{Efficiency analysis.}
Table~\ref{tab:efficiency} compares the generation and detection efficiency of \method with the three baseline methods. \method achieves a generation throughput of 16.4 tokens per second, the second highest among the evaluated methods. Its throughput is approximately 9.3\% higher than EXP and 47.7\% higher than PRC. Accordingly, its average generation latency of 30.5 seconds per sample is 8.4\% lower than EXP and 28.1\% lower than PRC. KGW remains substantially faster during generation, achieving 59.8 tokens per second and an average latency of 8.2 seconds per sample.

For detection, \method requires 0.57 seconds per text. This is 40\% lower than the 0.95-second latency of KGW, although it is higher than the latency of EXP and PRC. The additional detection cost is consistent with the design of \method: the detector first searches over candidate strides and then applies repeated Reed--Solomon consistency tests to sampled local windows. In contrast, the baseline detectors use less expensive aggregate statistics or direct code-matching procedures.

Importantly, the detection latency of \method remains below one second and corresponds to less than 2\% of its average generation latency. Thus, although local algebraic testing introduces additional detection work relative to EXP and PRC, detection does not dominate the overall runtime. Taken together, the results show that \method provides a balanced efficiency profile: it generates text faster than the sampling- and coding-based baselines EXP and PRC, while retaining sub-second detection latency.

\section{Case Study}
\label{app:case}
We present four representative case studies to qualitatively examine the behavior of \method and the evaluated post-processing attacks. The first three cases compare continuations generated by KGW, EXP, PRC, and \method from the same C4 prompts. All methods use Llama-3.1-8B-Instruct and the same generation configuration. The examples cover three different discourse settings: a local community complaint (\Sref{case:local-complaint}), an expository discussion (\Sref{case:public-marketing}), and a reflective essay (\Sref{case:poetry}).

The fourth case fixes one clean continuation generated by \method and applies six benchmark attacks to it (\Sref{case:attacks}). This controlled comparison illustrates how random deletion, burst deletion, partial truncation, emoji attack, synonym substitution, and paraphrasing affect the visible text and the underlying token-derived watermark sequence in different ways.

\definecolor{casegray}{HTML}{555555}
\definecolor{casegraybg}{HTML}{F5F5F5}
\definecolor{caseblue}{HTML}{527A9E}
\definecolor{casebluebg}{HTML}{F2F6FA}
\definecolor{casegreen}{HTML}{3E8C75}
\definecolor{casegreenbg}{HTML}{F2F8F5}
\definecolor{caseorange}{HTML}{A56B32}
\definecolor{caseorangebg}{HTML}{FBF6F0}

\newtcolorbox{casepromptbox}{
  enhanced,
  breakable,
  colback=casegraybg,
  colframe=casegray!65,
  colbacktitle=casegray!85,
  coltitle=white,
  fonttitle=\bfseries,
  title={Prompt},
  boxrule=0.4pt,
  arc=1pt,
  left=5pt,
  right=5pt,
  top=4pt,
  bottom=4pt,
  before upper=\small\itshape\raggedright,
  before skip=4pt,
  after skip=4pt
}

\newtcolorbox{caseoutputbox}[1]{
  enhanced,
  breakable,
  colback=casebluebg,
  colframe=caseblue,
  colbacktitle=caseblue,
  coltitle=white,
  fonttitle=\bfseries,
  title={#1},
  boxrule=0.4pt,
  arc=1pt,
  left=5pt,
  right=5pt,
  top=4pt,
  bottom=4pt,
  before upper=\small\raggedright,
  before skip=4pt,
  after skip=4pt
}

\newtcolorbox{caseoursbox}{
  enhanced,
  breakable,
  colback=casegreenbg,
  colframe=casegreen,
  colbacktitle=casegreen,
  coltitle=white,
  fonttitle=\bfseries,
  title={\method\ (Ours)},
  boxrule=0.5pt,
  arc=1pt,
  left=5pt,
  right=5pt,
  top=4pt,
  bottom=4pt,
  before upper=\small\raggedright,
  before skip=4pt,
  after skip=4pt
}

\newtcolorbox{caseanalysisbox}{
  enhanced,
  breakable,
  colback=caseorangebg,
  colframe=caseorange,
  colbacktitle=caseorange,
  coltitle=white,
  fonttitle=\bfseries,
  title={Case Analysis},
  boxrule=0.5pt,
  arc=1pt,
  left=5pt,
  right=5pt,
  top=4pt,
  bottom=4pt,
  before upper=\small\raggedright,
  before skip=4pt,
  after skip=4pt
}

\newtcolorbox{casecleanbox}{
  enhanced,
  breakable,
  colback=casegreenbg,
  colframe=casegreen,
  colbacktitle=casegreen,
  coltitle=white,
  fonttitle=\bfseries,
  title={Clean \method\ Output},
  boxrule=0.5pt,
  arc=1pt,
  left=5pt,
  right=5pt,
  top=4pt,
  bottom=4pt,
  before upper=\small\raggedright,
  before skip=4pt,
  after skip=5pt
}

\newtcolorbox{attacktracebox}[1]{
  enhanced,
  breakable,
  colback=casebluebg,
  colbacklower=caseorangebg,
  colframe=caseblue,
  colbacktitle=caseblue,
  coltitle=white,
  fonttitle=\bfseries,
  title={#1},
  boxrule=0.4pt,
  arc=1pt,
  left=5pt,
  right=5pt,
  top=4pt,
  bottom=4pt,
  segmentation style={
    solid,
    caseblue!45,
    line width=0.4pt
  },
  before upper=\small\raggedright,
  before lower=\small\raggedright,
  before skip=5pt,
  after skip=5pt
}

\subsection{Case 1: Local Community Complaint}
\label{case:local-complaint}

\begin{casepromptbox}
RANT: To the irresponsible young adults who park their vehicles at
Winter Haven Christian Church on Sixth Street and then leave trash and
empty beer bottles for others
\end{casepromptbox}

\begin{caseoutputbox}{KGW}
You know who you irresponsible people who leave their cars on our
neighborhood's most holy ground, you need to take them there. to clean
up. What is wrong with you people. Do you have no sense of decency or
respect for other people's property. I know you have no respect for
yourselves either, if you can't even take care of your own trash. Get
it together and take care of yourself. \ldots
\end{caseoutputbox}

\begin{caseoutputbox}{EXP}
to clean up. The Rant: To the irresponsible young adults who park their
vehicles at Winter Haven Christian Church on Sixth Street and then
leave trash and empty beer bottles for others to clean up. The Rant:
To the irresponsible young adults who park their vehicles at Winter
Haven Christian Church on Sixth Street and then leave trash and empty
beer bottles for others to clean up. \ldots
\end{caseoutputbox}

\begin{caseoutputbox}{PRC}
to clean up! ANOTHER RANT: To all of the people who think it's okay to
litter in our beautiful city! I can understand why you might think it's
convenient to throw your trash out the window of your car or leave it
on the sidewalk, but when you litter, you are contributing to the
degradation of our city and making it less enjoyable for everyone.
\ldots
\end{caseoutputbox}

\begin{caseoursbox}
to clean up\ldots{} I live a short drive from Winter Haven Christian
where I often notice groups of young adults leave their cars and attend
events at that church. While I support the church's various gatherings,
I have noticed an alarming and irresponsible trend. I, along with other
neighbors, see evidence of trash left behind---such as empty beer
bottles, cigarette butts, cigarette packs and sometimes clothing and
luggage tossed around and trashed. \ldots
\end{caseoursbox}

\begin{caseanalysisbox}
The continuation produced by \method preserves the first-person,
community-complaint perspective introduced by the prompt and develops
it using location-specific details. KGW remains on topic but contains
grammatical discontinuities and shifts toward a substantially more
confrontational tone. EXP degenerates into repeated reconstruction of
the prompt. PRC produces a coherent anti-littering discussion, but
moves from the specific church and neighborhood setting toward a more
generic civic message. This example illustrates that \method can retain
the local voice, topic, and communicative purpose of the source text
while embedding the watermark.
\end{caseanalysisbox}
\subsection{Case 2: Public-Sector Marketing}
\label{case:public-marketing}

\begin{casepromptbox}
Marketers are employed in the public sector as well as the private
sector. Most people think of marketing as a strictly private sector
activity, but
\end{casepromptbox}

\begin{caseoutputbox}{KGW}
a. They work in advertising agencies and marketing firms. They work
for both products and services are marketed by companies that operate
for profit. But marketing is also a key function in government
agencies and non-profit organizations. These agencies and organizations
also have to find ways to persuade and inform the public about their
products and services. \ldots
\end{caseoutputbox}

\begin{caseoutputbox}{EXP}
many governments offer marketing services to promote economic
development. The U.S. Department of Commerce, for example, has a
Bureau of Industry and Security that markets U.S. goods and services
abroad. State and local governments also have tourism offices that
market vacation destinations and local businesses. \ldots
\end{caseoutputbox}

\begin{caseoutputbox}{PRC}
that is not true. Many public sector organizations, such as government
agencies, public hospitals, and non-profit organizations, use marketing
to achieve their goals. The main difference between marketing in the
private and public sectors is that public sector organizations do not
aim to maximize profits, but rather to achieve social or public goals.
\ldots
\end{caseoutputbox}

\begin{caseoursbox}
the truth is that marketing exists in the public and non-profit sectors
as well. Federal, state, and local governments, as well as non-profit
agencies employ market researchers and analysts to fulfill various
requirements, such as those for policy and program support, budget
formulation, service innovation, and community engagement. Those in
this field make up a crucial component of any sector, including
not-for-profit organizations, which promote important social causes.
\ldots
\end{caseoursbox}

\begin{caseanalysisbox}
All four methods remain broadly relevant to the prompt, but differ in
how naturally they continue its unfinished contrast. KGW begins with an
incomplete construction and contains grammatical problems. EXP and PRC
produce coherent explanations, although PRC adopts a relatively generic
textbook-style comparison between public and private organizations.
\method directly completes the contrast introduced by the prompt,
maintains an article-like explanatory register, and provides concrete
examples involving policy support, budget formulation, service
innovation, and community engagement.
\end{caseanalysisbox}
\subsection{Case 3: Poetry, War, and Witness}
\label{case:poetry}

\begin{casepromptbox}
What can poetry do when faced with the daily realities of war and
hatred, violence and terror, the crimes that humans enact upon each
other and on
\end{casepromptbox}

\begin{caseoutputbox}{KGW}
What is poetry do in these circumstances? What can poetry offer the
world? Poetry can speak out against humanity. Poetry can give voice to
the voiceless, and it can speak for those who are oppressed. Poetry
can stand witness to the crimes against humanity. Poets can speak for
the silenced. Poetry can help to humanize the victims and to give
dignity to the dead. \ldots
\end{caseoutputbox}

\begin{caseoutputbox}{EXP}
the earth? Where can we turn in our faith, our hope, and our will to
live in the face of such overwhelming darkness? These are the questions
that poet and activist Judy Grahn poses in her new collection,
\emph{Love Belongs to Those Who Fight and Die}, a powerful exploration
of the transformative power of art and love to resist hatred and
violence. \ldots
\end{caseoutputbox}

\begin{caseoutputbox}{PRC}
the earth? Can it inspire resistance, comfort, and healing? And what
about its relationship to politics, protest, and activism? These are
some of the questions that poet and translator Ilya Kaminsky explores
in this anthology, which brings together poems from twenty different
languages and poets from around the world. \ldots
\end{caseoutputbox}

\begin{caseoursbox}
the earth? It can try to bear witness; to make public the private pains
and losses; to seek justice, or perhaps even imagine it; to preserve
beauty in the face of destruction; to analyze the circumstances behind
human hatred; or to imagine a different world, one in which love
conquers all. It can express the complicated nature of war---both the
horrors it creates and the power and vulnerability of its victims.
\ldots
\end{caseoursbox}

\begin{caseanalysisbox}
The prompt poses an abstract question and invites a reflective
continuation. \method preserves this discourse mode by directly
enumerating the roles poetry may play, including witnessing suffering,
preserving beauty, seeking justice, and imagining alternatives. KGW
remains on topic but contains an immediate grammatical error and
repetitive formulations. EXP and PRC are fluent, but both shift the
abstract discussion toward a description of a particular poet, book,
or anthology. The \method output instead continues the original
essay-level argument without replacing it with a more specific and
unprompted framing.
\end{caseanalysisbox}
\subsection{Case 4: Effects of Different Post-Processing Attacks}
\label{case:attacks}

All attacked versions below are derived from the same clean \method-generated text.

\begin{casepromptbox}
Air strikes conducted near Erbil and the Mosul dam amid reports of mass
killings by Sunni fighters in Syria. The United States has conducted
air strikes
\end{casepromptbox}

\begin{casecleanbox}
against Sunni militants near the north-eastern Iraqi city of Erbil and
along the Tigris River at the Mosul dam. The US defence department said
the strikes were carried out at President Barack Obama's direction on
Thursday to protect civilians under attack from the militant group,
which calls itself Islamic State. The targets were struck around Erbil,
the location of Kurdish leaders including Iraq's president, Massud
Barzani, and the US's largest diplomatic mission in Iraq. There was no
independent confirmation of the strikes, which was based on reports
from a US official at the Pentagon. \ldots{} Separately, reports have
emerged of mass killings in Syria.
\end{casecleanbox}

\begin{attacktracebox}{
Random Deletion
\hfill
\textnormal{\textsc{Rand. Del.}, $r=0.2$}
}
\textbf{Attacked Text.}
Sunni militants the north-eastern Iraqi city of Erbil and along the
Tigris River at the Mosul. The defence said the strikes were carried
President Barack Obama's direction on Thursday protect civilians under
from the militant group, which calls itself Islamic State. The struck
Erbil, the location of leaders Iraq's president, Massud Barzani, the
US's diplomatic mission in Iraq. There was no confirmation of the
strikes, which based on the reports from a official at the Pentagon.
\ldots

\tcblower

\textbf{Qualitative Effect.}
Words are removed independently throughout the continuation. Although
many content words remain, missing articles, prepositions, and
connectors damage grammaticality. Each deletion also shifts the
positions of subsequent token-derived symbols, producing distributed
synchronization changes across the text.
\end{attacktracebox}

\begin{attacktracebox}{
Burst Deletion
\hfill
\textnormal{\textsc{Burst Del.}, $r=0.5$}
}
\textbf{Attacked Text.}
There was no independent confirmation of the strikes, which was based
on the reports from a US official at the Pentagon, as the militants,
who are also known as Isis, maintain control of most media outlets in
the region and telephone connectivity was down. Separately, reports
have emerged of mass killings in Syria. Turkey confirmed earlier
Thursday that it would be taking part in a new co-ordinated effort by
Sunni Gulf states. \ldots

\tcblower

\textbf{Qualitative Effect.}
Burst deletion removes contiguous sentence- or paragraph-level units.
The beginning of the report is absent, but the retained passage remains
largely fluent because the internal order of each surviving span is
unchanged. The detector therefore loses a large region while still
observing long locally ordered subsequences.
\end{attacktracebox}

\begin{attacktracebox}{
Partial Truncation
\hfill
\textnormal{\textsc{Trunc.}, $r=0.5$}
}
\textbf{Attacked Text.}
from the militant group, which calls itself Islamic State. The targets
were struck around Erbil, the location of Kurdish leaders including
Iraq's president, Massud Barzani, and the US's largest diplomatic
mission in Iraq. There was no independent confirmation of the strikes,
which was based on the reports from a US official at the Pentagon.
\ldots{} Separately, reports have emerged of mass killings in Syria.

\tcblower

\textbf{Qualitative Effect.}
Partial truncation retains one contiguous subsection of the original
continuation. The passage begins in the middle of a sentence, but its
internal order and local syntax remain mostly unchanged. The primary
challenge is the reduced amount of available watermark evidence rather
than distributed corruption.
\end{attacktracebox}

\begin{attacktracebox}{
Emoji Attack
\hfill
\textnormal{\textsc{Emoji.}, $r=1.0$}
}
\textbf{Attacked Text.}
against \texttt{[emoji]} Sunni \texttt{[emoji]} militants
\texttt{[emoji]} near \texttt{[emoji]} the \texttt{[emoji]}
north-eastern \texttt{[emoji]} Iraqi \texttt{[emoji]} city
\texttt{[emoji]} of \texttt{[emoji]} Erbil \texttt{[emoji]} and
\texttt{[emoji]} along \texttt{[emoji]} the \texttt{[emoji]} Tigris
\texttt{[emoji]} River \texttt{[emoji]} at \texttt{[emoji]} the
\texttt{[emoji]} Mosul \texttt{[emoji]} dam. \texttt{[emoji]} The
\texttt{[emoji]} US \texttt{[emoji]} defence \texttt{[emoji]}
department \texttt{[emoji]} said \texttt{[emoji]} the
\texttt{[emoji]} strikes \texttt{[emoji]} were carried out. \ldots

\tcblower

\textbf{Qualitative Effect.}
The original words and their relative order remain almost unchanged,
so the semantic content remains recognizable to a reader. However,
inserting an emoji after every word substantially changes tokenization
and the spacing between watermark-bearing symbols. This attack
therefore creates severe token-level disruption without semantic
rewriting.
\end{attacktracebox}

\begin{attacktracebox}{
Synonym Substitution
\hfill
\textnormal{\textsc{Syn. Sub.}, $r=0.5$}
}
\textbf{Attacked Text.}
against Sunni activist near the north-eastern Iraki urban center of
Erbil and along the Tigris River River atomic number 85 the Mosul dam.
The US defense section said the strikes be transmit come out atomic
number 85 President Barack Obama's direction on Th to protect civilians
below attack from the warlike group, which calls itself Islamic State.
The object live move around Erbil, the locating of Kurdish leaders let
in Iraq's president, Massud Barzani. \ldots

\tcblower

\textbf{Qualitative Effect.}
The attack replaces many lexical items with WordNet alternatives.
Because the replacements are not fully context-aware, some are
inappropriate; for example, the preposition ``at'' is replaced by
``atomic number 85.'' The operation changes the vocabulary-bin
assignments of many observed tokens while also introducing semantic
and grammatical artifacts.
\end{attacktracebox}

\begin{attacktracebox}{
Document-Level Paraphrasing
\hfill
\textnormal{\textsc{Para.}}
}

\smallskip
\textbf{Attacked Text.}
North-East of Iraq, near the Mosul dam, against Sunni fighters in Iraq,
the US defense department has said. President Obama ordered the
strikes, it said, to protect civilians under attack from the militant
group which calls itself the Islamic State. The target was around
Erbil, where the Kur theorist leader Barham Salih is holding a meeting,
as well as other sites. Reports are coming in of mass killings by Sunni
fighters in Syria. The UN human rights spokesman, Rupert Colville, said
that Sunni fighters had killed hundreds in Russia and Iraq. Iran has
offered hundreds of troops to help the government forces, but the
number has not been confirmed, because of a dispute with Iraq about
the strategic importance and tactical control of this help. Egypt has
also agreed to help, sending about 170 elite commandos and fighter jets
against the main targets. \ldots

\tcblower

\textbf{Qualitative Effect.}
DIPPER produces a fluent document-level paraphrase that preserves the
broad topic of US air strikes near Erbil and reports of violence in
Syria. However, it substantially rewrites the lexical realization and
sentence structure of the original continuation. It also introduces
factual drift and entity errors in this example, including altered
named entities, incorrect locations, and additional details not
supported by the clean output. Compared with word-level substitution,
the resulting text is more readable, but substantially less of the
original token-derived watermark evidence is expected to remain.
\end{attacktracebox}

\begin{caseanalysisbox}
The example shows that the fraction of modified words alone does not fully characterize attack difficulty. Burst deletion and partial truncation remove substantial amounts of content but preserve long ordered spans. Random deletion removes a smaller fraction of the text but distributes synchronization shifts throughout the sequence. Emoji insertion preserves nearly all visible words while heavily perturbing tokenization, whereas synonym substitution directly changes the lexical items from which the watermark symbols are obtained.\\

Document-level paraphrasing presents a qualitatively different challenge. Unlike the other attacks, DIPPER generates a new realization of the passage rather than locally editing the original token sequence. The paraphrase remains fluent and preserves the broad news topic, but changes much of the wording and sentence structure. It also introduces semantic drift and factual artifacts in this example. Consequently, there may be few directly surviving local symbol sequences even when the paraphrased text remains recognizable as semantically related to the clean output.\\

These differences motivate the two principal components of the \method detector. Multi-stride search helps locate surviving local structure whose spacing has changed, while bounded-error Reed--Solomon consistency testing tolerates a limited number of symbol mismatches. These mechanisms are well suited to attacks that preserve some local lexical structure, including structured deletion, tokenization changes, and moderate substitution. In contrast, document-level paraphrasing may replace most of the original lexical realization and therefore represents a more fundamental limitation of local algebraic detection.
\end{caseanalysisbox}

\end{document}